\newtheorem{proposition?}{Proposition?}
\theoremstyle{plain}
\newtheorem{thm}{Theorem}[section]
\newtheorem{prop}[thm]{Proposition}
\newtheorem*{thm*}{Theorem}
\theoremstyle{definition}
\newcommand{\HH}{\mathcal{H}}
\newcommand{\HS}{\mathcal{H}_{S}}
\newcommand{\BHS}{B(\mathcal{H}_{S})}
\newcommand{\BLX}{B(L^2(X))}
\newcommand{\BBG}{\left( B(\mathcal{H}_{S}) \otimes B(L^{2}(X)) \right)^{G}}
\newcommand{\const}{\boldsymbol{1}}
\newcommand{\LX}{L^2(X)}
\newcommand{\his}{\mathcal{H}_{\mathcal{S}}}
\newcommand{\hir}{\mathcal{H}_{\mathcal{R}}}
\theoremstyle{remark}
\newtheorem{remi}[thm]{Remark}
\newcommand{\Y}{\yen}
\newcommand{\ket}[1]{|#1\rangle} 
\newcommand{\bra}[1]{\langle#1|} 
\newcommand{\tr}[1]{\textrm{tr}\left[#1\right]} 
\newcommand{\A}{\mathsf{A}}
\newcommand{\Sy}{\mathcal{S}}
\newcommand{\R}{\mathcal{R}}
\newcommand*\colvec[3][]{
    \begin{pmatrix}\ifx\relax#1\relax\else#1\\\fi#2\\#3\end{pmatrix}
}
\newcommand{\E}{\mathsf{E}}
\begin{document}

\title{Quantum Reference Frames on Finite Homogeneous Spaces}

\author{Jan G\l owacki}\email{glowacki@cft.edu.pl}
\affiliation{Center for Theoretical Physics, Polish Academy of Sciences, Al. Lotnik\'ow 32/46, 02-668 Warsaw, Poland}
\author{Leon  Loveridge}
\email{leon.d.loveridge@usn.no }
\affiliation{Quantum Technology Group, Department of Science and Industry Systems, University of South-Eastern Norway, 3616 Kongsberg, Norway}
\author{James Waldron}
\email{james.waldron@newcastle.ac.uk}
\affiliation{
School of Mathematics, Statistics and Physics, Newcastle University, Newcastle upon Tyne, NE1 7RU, United Kingdom}

\begin{abstract}
We present an operationally motivated treatment of quantum reference frames
in the setting that the frame is a covariant positive operator valued measure (POVM) on a finite homogeneous space, generalising the principal homogeneous spaces studied in previous work. We focus on the case that the reference observable is the canonical covariant projection valued measure on the given space, and show that this gives rise to a rank-one covariant POVM on the group, which can be seen as a system of coherent states, thereby making contact with recent work in the perspective-neutral approach to quantum reference frames.
\end{abstract}

\maketitle

\section{Introduction}

In the orthodox treatment of quantum theory, there is an implicit reliance on background (\emph{external}) reference frames which underpin the description of states and observables, and whose interpretation is rooted in classical physics. The topic of \emph{quantum reference frames} (QRFs) seeks to provide a quantum theory in which reference frames are viewed as quantum physical systems and treated \emph{internally}, i.e. as an explicit part of the theoretical description of the quantum physical world. Though the idea can be traced back as far as 1946 \cite{eddington1946fundamental}, the main catalyst for the modern ideas can be found in an exchange regarding the nature of superselection rules in the late 60's \cite{aharonov1967charge,wick1970superselection}, and appears in a more definitive form in the 80s \cite{PhysRevD.30.368}. This subject continues to inspire both pure and applied research, and a variety of formulations now exist; broadly these split into four categories: that motivated by ideas in quantum information (largely summarised in \cite{brs}; see \cite{castro2021relative} for some recent work), that motivated by gauge theory and Dirac quantisation of constrained systems, known as the \emph{perspective-neutral} approach \cite{vanrietvelde2018switching,vanrietvelde2018switching,krumm2020quantum,de2021perspective}, that founded on a direct description of frame transformations \cite{giacomini2019relativistic,de2020quantum}, and that closer to operational ideas in quantum mechanics and quantum measurement theory \cite{loveridge2012quantum,lov6,lov1,lov2,lov4,loveridge2019relative,loveridge2020relational}.

The present contribution follows the latter approach, in which invariant observables are obtained through a \emph{relativisation} 
 map, viewed as the explicit introduction of a quantum reference frame into the description, denoted by $\Y$ in previous works. This map is constructed so as to give rise to positive operator valued measures (POVMs) invariant under a tensor product unitary representation $U_{\Sy}\otimes U_{\mathcal{R}}$ of some locally compact group $G$, given explicitly as 
 \begin{equation}\label{eq:yen1}
     (\Y\circ\A)(\Omega) = \int_G U_{\Sy}(g)\A(\Omega)U_{\Sy}(g)^* \otimes d\E(g).
 \end{equation}
 Here, $\E$---the \emph{frame observable}---is a covariant POVM whose domain is the Borel sets of $G$, i.e.,
 for any $X \in \mathcal{B}(G)$, $U_{\mathcal{R}}(g)\E(X)U_{\mathcal{R}}(g)^* = \E(g.X)$, and $\A$ is a POVM with arbitrary value space containing (measurable) $\Omega$. If $\E$ admits a state which is arbitrarily well localised at the identity in $G$ (which follows from a technical condition known as the norm-1 property \cite{heinonen2003norm}), it can be shown that $\A$ and $\Y\circ\A$ can be made arbitrarily probabilistically close, pointing to an equivalence between the descriptions with internal and externalised frame. Conversely, if $\E$ is bad, there are distributions given by $\A$ which cannot be captured at the invariant level. See \cite{lov1,lov4} for further details of the construction and a more extensive discussion of the physical interpretation.

 In order to account for a wider range of physical scenarios, $G$ as above should be replaced by some homogeneous $G$-space $X$ on which the action of $G$ is (possibly) not free, and $\E$ should therefore be defined on this space. This may arise from e.g. limitations on the availability of a physical frame; in this case the actual frame may not be capable of describing the properties of $\Sy$ fully, or phrased in more relational terms, may not capture the full set of relational properties of $\Sy$ combined with a better frame. For example, $\Sy$ may possess a $G$-covariant PVM built from a collection of rank-$1$ projections, but $\R$ may be ``smaller".
 
 Mathematically, such a move requires a significant update of \eqref{eq:yen1}, and in the present work we provide this generalisation for the case that $G$ and $X$ are finite. If the action of $G$ on $X$ is not free, the presence of a stabiliser/isotropy subgroup necessitates a re-appraisal of the physics of quantum reference frames in contrast to   
 \cite{loveridge2012quantum,lov6,lov1,lov2,lov4,loveridge2019relative,loveridge2020relational}.  Our work differs from that of \cite{de2021perspective}, in which a quantum frame is given by a system of coherent states 
 $S=\{\eta_g ~|~g \in G\}$
 and an \emph{incomplete} frame corresponds to the presence of a non-trivial subgroup of $G$ which stabilises $\eta_{e}$. However, in the setting that the frame observable is the canonical covariant PVM on $X$ acting in $L^2(X)$, we will show that this induces a rank-$1$ POVM on $G$ which can be interpreted as a system of coherent states. We see that the presence of the stabiliser subgroup, through the relativisation and restriction procedures ``enforces" its symmetry back on $\Sy$, in direct analogy to the findings of \cite{de2021perspective}. After briefly discussing the set-up of our framework, we provide the analogue of $\Y$ on a finite homogeneous space, and provide some physical discussion and an illustrative example.

\section{Preliminaries}

A finite set $X$ transforms under the action of a finite group $G$ and quantum reference frames are constructed as a covariant (under a unitary Hilbert space representation of $G$) POVM on $X$. The operational philosophy dictates that physical claims are made at the level of the Born rule probabilities generated by states. Special attention is paid to the canonical PVM on $X$. Two classes of frame states are introduced: the invariant states, which as we will see later are maximally bad frames, and the localised states, which are maximally informative.

\subsection{Setup}
Throughout the rest of this paper, $G$ denotes a finite group with identity $e \in G$, $X$ a finite set with a transitive left $G$-action $\alpha:G \times X \to X$; we write $g \cdot x$ for $\alpha(g,x)$, and $U_{\mathcal{S}}$ is a unitary representation of $G$ in a Hilbert space $\his$ representing some quantum system $\Sy$ (which is arbitrary). We set $\hir := L^2(X)$ as the set of complex functions on $X$. For $x \in X$ we denote by $G_{x} := \{ g \in G \; | \; g \cdot x = x \}$ the stabiliser of $x$.
Note that $G_{g \cdot x} = g G_{x} g^{-1}$.
The map $G/G_{x} \to X$, $gG_{x} \mapsto g \cdot x$ is a $G$-equivariant bijection.
We set $n := |X|$, which is equal to $|G/G_{x}|$ and therefore also to the dimension of $\LX$.

By $U_{R}$ we denote the unitary representation of $G$ on $\LX$ defined by $(g \cdot f)(x) = f(g^{-1} \cdot x)$ for $f \in \LX$ and $g \in G$. Note that if $\delta_{x}$ is the indicator function of $x \in X$ then $g \cdot \delta_{x} = \delta_{g \cdot x}$. The state $\delta_{x}$ may be written $\ket{x}$ and the corresponding rank-$1$ projection $\ket{x}\bra{x}$.
Fixing $x \in X$ and setting $H:=G_x$ identifies $X$ with $G/H$ and $\LX$ with the induced representation $L^{2}(G/H) = \mathrm{Ind}_{H}^{G} \mathbb{C}$. (Since the stabilisers of different $x$ are conjugate, the corresponding induced representations are isomorphic.) Setting $X=G$ with the action given by the group multiplication yields the left regular representation of $G$.

For $x \in X$ we denote by $P_{x} \in \BLX$ the projection onto the subspace spanned by the indicator function $\delta_{x}$ of $x$, and by $P$ the projection valued measure (PVM) defined by 
\[
P : X \to \BLX \; , \; x \mapsto P_{x},
\]
 extended by additivity to subsets of $X$.
The PVM $P$ is equivariant/covariant with respect to the actions of $G$ on $X$ and on $\BLX$, i.e. $P_{g \cdot x} = U_{R}(g) P_{x} U_{R}(g)^* = g \cdot P_{x}$. (The pair $(P,U_{R})$ corresponds to the standard representation of the crossed product $C(X) \rtimes G$ on $L^{2}(X)$ \cite{williams2007crossed}.) We let $G$ act on $\BHS$ and on $\BLX$ by conjugation, so that $g \cdot A = U_{\mathcal{S}}(g) A U_{\mathcal{S}}(g)^*$ for $A \in \BHS$ and $g \in G$, and similarly for $B \in \BLX$; $G$ acts on $\BHS \otimes \BLX$ diagonally. The triple $(U_R,P,\LX)$ is understood as a quantum reference frame, and $P$ the frame observable.

\begin{remi}
We work primarily in the setting that the frame is given as above. However, there is a generalisation in which $P$ is replaced by a covariant POVM $E$, and $\LX$ can be any Hilbert space in which the covariant POVM can be constructed, e.g. it can by smaller than $\LX$.
\end{remi}

\subsection{States and Probabilities}\label{subsec:sap}
Each point $x \in X$ determines a pure state (understood as a rank-$1$ projection) $P_{x}$ corresponding to the function $\delta_{x}$. Pure states of this form are understood as being localised at the corresponding point in $X$; this is motivated through the localisation of the
probability measure to which we will soon turn. We note also that these states are not $G$-invariant unless $G$ acts trivially on $X$. The space of $G$-invariant vectors $\LX^{G}$ in $\LX$ is one dimensional and spanned by the constant function $\const$, since
\begin{align*}
\mathrm{Hom}_{G} ( \mathbb{C} , \mathrm{Ind}_{H}^{G} \mathbb{C} ) \cong \mathrm{Hom}_{H} ( \mathbb{C} , \mathbb{C} ) \cong \mathbb{C} ,
\end{align*}
by Frobenius reciprocity (e.g. \cite{serre1977linear}). The corresponding state $P_{\const}$ is $G$-invariant.

More generally, invariant pure states correspond to vectors that are contained in one-dimensional subrepresentations, because if $\phi \in \LX$ then $U_{\mathcal{S}}(g) P_{\phi} U_{\mathcal{S}}(g)^* = P_{\phi}$ if and only if $U_{\mathcal{S}}(g) \phi = \lambda \phi$ for some $\lambda \in S^{1}$.
Equivalently, for each $x \in X$ and $g \in G$
\begin{equation}
\label{eqn: invariant_pure_state}
\phi(g^{-1} \cdot x) = \lambda \phi(x)
\end{equation}
for some $\lambda \in S^{1}$.
Non-zero vectors of this type correspond to one dimensional subrepresentations of $\LX$.
If $\mathbb{C}_{\chi}$ is such a representation, where $g \cdot v=\chi(g)v$ for $\chi:G \to \mathbb{C}^{\times}$, then
\begin{align*}
\mathrm{Hom}_{G} ( \mathbb{C}_{\chi} , \mathrm{Ind}_{H}^{G} \mathbb{C} ) \cong \mathrm{Hom}_{H} ( \mathbb{C}_{\chi} , \mathbb{C} ) \cong 
\begin{cases}
\mathbb{C} & \mathrm{if } \; \chi(H) = 1 \\
0 & \text{else }
\end{cases}
\end{align*}
so that invariant pure states correspond to one dimensional representations of $G$ that are trivial when restricted to $H$. There always exists an invariant mixed state by the finite dimensionality of $\LX$, given by $\frac{1}{n} \mathbb{I}$ (where $n = |X| = \mathrm{dim}~ L^2(X)$).

Probability measures/distributions are given by the usual trace formula.
For a POVM $E$, 
\begin{equation}
    p^E_{\rho}(x)=\tr{E(x)\rho}
\end{equation}
is a probability distribution on $X$; the corresponding measure is obtained through the additivity of $E$. If $\rho$ is a rank one projection defined by the unit vector $\phi$, we write $p^E_{\phi}(x)$ for the distribution. A state $\rho$ is called localised at $x\in X$ with respect
to $E$ if $p^E_{\rho}(x)=1$, i.e., if upon measurement of $E$, the outcome $x$ is found with probabilistic certainty. In case $X=G$, the POVM $E$ paired with some state gives a probabilistic orientation assignment on $G$.

 POVMs do not share the localisability properties of PVMs. For any PVM $Q$ on $X$, given a projection $Q_x$ there is a state $\rho$ for which $\tr{Q_x\rho} =1$ - simply take (for example) $\rho$ to be the projection onto any unit vector in the range of $Q_x$. For a POVM $E$ on $X$ it may be the case that there is no such state; we will encounter such an example in Subsec. \ref{sec: povm} for a POVM on $G$ corresponding to a system of coherent states. However, we recall that a POVM $E$ satisfies the norm-$1$ property if each $E(x)\neq 0$
has $||E(x)||=1$. The following proposition characterises POVMs with the localisability properties of PVMs, i.e., there is a state for which
$\tr{E(x)\rho}=1$. This unit probability can also be interpreted as $E$ possessing the property associated with $x$ with probabilistic certainty in the state $\rho$.

\begin{prop}\label{prop:no1}
The following are equivalent:
\begin{enumerate}
    \item $E$ satisfies the norm-$1$ property; \label{aa}
    \item For each $E(x) \neq 0$ there is a state $\rho$ such that
    $\tr{E(x)\rho}=1$;\label{bb}
    \item For each $E(x) \neq 0$ there is a pure state $\rho$ such that
    $\tr{E(x)\rho}=1$.\label{cc}
\end{enumerate}
\end{prop}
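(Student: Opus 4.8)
The plan is to establish the cycle of implications $\ref{aa} \Rightarrow \ref{cc} \Rightarrow \ref{bb} \Rightarrow \ref{aa}$, resting entirely on the elementary spectral theory of a single positive operator. The structural fact I would record first is that, because $E$ is a POVM, each effect satisfies $0 \le E(x) \le \id$: positivity is part of the definition, and $E(x) \le \id$ follows from $E(x) \le \sum_{y} E(y) = \id$. Hence $E(x)$ is self-adjoint with spectrum contained in $[0,1]$, and since $\LX$ is finite dimensional its operator norm $\no{E(x)}$ equals its largest eigenvalue $\lambda_{\max}$, which is genuinely attained on the corresponding (nonzero) eigenspace. The norm-$1$ property is therefore exactly the statement $\lambda_{\max} = 1$ for every nonzero $E(x)$.

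With this in hand the three implications are short. For $\ref{aa} \Rightarrow \ref{cc}$, assume $\no{E(x)} = 1$; choosing a unit eigenvector $\phi$ with $E(x)\phi = \phi$ and putting $\rho = \kb{\phi}{\phi}$ yields a pure state with $\tr{E(x)\rho} = \ip{\phi}{E(x)\phi} = 1$. The step $\ref{cc} \Rightarrow \ref{bb}$ is immediate, a pure state being in particular a state. For $\ref{bb} \Rightarrow \ref{aa}$, I would use the bound $\tr{E(x)\rho} \le \no{E(x)}\,\tr{\rho} = \no{E(x)} \le 1$, valid for any state $\rho$; if equality to $1$ is achieved by some $\rho$ then $\no{E(x)} = 1$, which is the norm-$1$ property.

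There is no serious obstacle here, but the one point deserving care is the appeal to finite dimensionality in the step $\ref{aa} \Rightarrow \ref{cc}$: it is what guarantees that the supremal value $\no{E(x)}$ of $\tr{E(x)\rho}$ over states is \emph{attained}, via an honest top eigenvector. In an infinite-dimensional setting the norm-$1$ property alone would not supply such a vector if $1$ were merely in the continuous spectrum of $E(x)$, and then $\ref{cc}$ (indeed $\ref{bb}$) could fail; it is precisely the hypothesis that $X$ is finite, so that $\LX$ has dimension $n = |X|$, that forces the three conditions to coincide. I would flag this as the conceptual content of the proposition rather than any computational difficulty.
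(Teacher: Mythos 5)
Your proof is correct and is precisely the argument the paper compresses into its one-line remark that ``the proof follows from the spectral radius formula and the finite dimensionality of the Hilbert space'': you use norm $=$ largest eigenvalue for the self-adjoint effect $E(x)$, attainment of that eigenvalue by an eigenvector in finite dimensions, and the bound $\tr{E(x)\rho} \le \no{E(x)}$ for the converse. Your closing observation about possible failure in infinite dimensions (when $1$ lies only in the continuous spectrum) correctly identifies why the paper flags finite dimensionality as essential.
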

The proof follows from the spectral radius formula and the finite dimensionality of the Hilbert space.\\

Returning to the canonical PVM $P$, we conclude by observing that the distribution for $P$ and any $G$-invariant pure state is uniform, equal to that given by the state $\frac{1}{n} \mathbb{I}$:
If $P_{\phi}$ is $G$-invariant, then by \eqref{eqn: invariant_pure_state} the function $|\phi(x)|^2$ is constant, with value $\frac{1}{n}$ if $\phi$ is normalised. We have $p^P_{\phi} : X \to \mathbb{R}^{\ge 0}$ is equal to $x \mapsto \mathrm{tr} \left( P_{x} P_{\phi} \right) = \langle \phi , P_{x} \phi \rangle = | \phi(x) |^2 = \frac{1}{n}$, and therefore $p^P_{\phi} = p^P_{\frac{1}{n} \mathbb{I}}$.

\section{Relativisation and Restriction}

A quantum system $\Sy$ represented by a complex separable Hilbert space $\his$ is combined with a quantum reference frame $\R$ with Hilbert space $\LX$. Relative observables on $\his \otimes \LX$ are constructed through a generalisation to a finite homogeneous space of
the relativisation ($\Y$) mapping \eqref{eq:yen1}. We provide a family of such maps---one for each $x\in X$---and for these to be well defined the domain must be $G_x$ invariant. As we shall see, the presence of the non-trivial stabiliser means that only $G_x$-invariant observables of $\Sy$ can be ``measured", or more accurately, can be used to compute the statistics of system plus reference. This means that some probability distributions for observables of $\Sy$ cannot be described relative to the given frame, which is therefore understood as \emph{incomplete}, in line with other usage in the literature (e.g. \cite{de2021perspective}). 

\subsection{The $\Y$ map on a homogeneous space}

\begin{prop}\label{prop:props}
For $x \in X$ the map 
\begin{align} 
\nonumber \yen_x : \BHS^{G_{x}} & \to \BBG \\
\label{eqn: yen}
A & \mapsto \sum_{gG_{x} \in G/G_{x}} U_{\mathcal{S}}(g) A U_{\mathcal{S}}(g)^* \otimes P_{g \cdot x}
\end{align}
is a unit preserving, injective C*-algebra homomorphism, where the sum is over a set of representatives of the right cosets of $G_{x}$ in $G$.
\end{prop}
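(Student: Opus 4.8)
The plan is to verify the defining properties of a unital injective $*$-homomorphism one at a time, with the substance concentrated in well-definedness and multiplicativity. First I would check that the summand $U_{\mathcal{S}}(g) A U_{\mathcal{S}}(g)^* \otimes P_{g \cdot x}$ depends only on the coset $gG_{x}$ and not on the chosen representative. Writing an alternative representative as $gh$ with $h \in G_{x}$, the second tensor factor is unchanged because $gh \cdot x = g \cdot x$, while the first factor satisfies $U_{\mathcal{S}}(gh) A U_{\mathcal{S}}(gh)^* = U_{\mathcal{S}}(g)\bigl(U_{\mathcal{S}}(h) A U_{\mathcal{S}}(h)^*\bigr)U_{\mathcal{S}}(g)^* = U_{\mathcal{S}}(g) A U_{\mathcal{S}}(g)^*$, using precisely the hypothesis $A \in \BHS^{G_{x}}$. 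This is exactly where restricting the domain to $G_{x}$-invariant operators is needed, and it is the conceptual crux of the construction. That the image lands in $\BBG$ then follows by reindexing: conjugating by the diagonal action of $k \in G$ sends the $gG_{x}$ term to the $kgG_{x}$ term, and since $g \mapsto kg$ permutes the cosets this leaves the sum invariant.

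Next I would establish the algebraic properties. Unitality is immediate from the fact that $gG_{x} \mapsto g \cdot x$ is a bijection $G/G_{x} \to X$: taking $A = \id_{\his}$ gives $\sum_{gG_{x}} \id \otimes P_{g \cdot x} = \id \otimes \sum_{y \in X} P_{y} = \id \otimes \id$, since $P$ is a PVM summing to the identity on $\LX$. The central computation is multiplicativity. Expanding $\yen_x(A)\yen_x(B)$ as a double sum over cosets $gG_{x}$ and $g'G_{x}$, the second tensor factors multiply as $P_{g \cdot x} P_{g' \cdot x}$; because $P$ is a PVM of mutually orthogonal rank-one projections, this vanishes unless $g \cdot x = g' \cdot x$, equivalently (again by the bijection) unless $gG_{x} = g'G_{x}$. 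Only the diagonal terms survive, and on these $U_{\mathcal{S}}(g)^* U_{\mathcal{S}}(g) = \id$ collapses the system factor to $U_{\mathcal{S}}(g) AB U_{\mathcal{S}}(g)^*$, yielding $\yen_x(AB)$. The $*$-property is routine: each $P_{g \cdot x}$ is self-adjoint, so taking adjoints termwise gives $\yen_x(A)^* = \yen_x(A^*)$.

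Finally, for injectivity I would exploit that each $P_{y} = \ket{y}\bra{y}$ is rank one. Sandwiching $\yen_x(A)$ between $\id_{\his} \otimes \bra{y}$ and $\id_{\his} \otimes \ket{y}$ with $y = g \cdot x$ recovers the single operator $U_{\mathcal{S}}(g) A U_{\mathcal{S}}(g)^*$; hence $\yen_x(A) = 0$ forces $U_{\mathcal{S}}(g) A U_{\mathcal{S}}(g)^* = 0$ and therefore $A = 0$. (One cannot here simply invoke simplicity of the domain, since the fixed-point algebra $\BHS^{G_{x}}$ need not be simple; the explicit block recovery settles injectivity directly.)

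The main obstacle is not any single estimate but keeping the coset bookkeeping consistent throughout: well-definedness, $G$-invariance, and the collapse of the double sum in the multiplicativity step all hinge on the same bijection $G/G_{x} \cong X$ together with the interplay between $G_{x}$-invariance of $A$ and orthogonality of the PVM. Once these two facts are isolated, every individual step reduces to a short reindexing or an orthogonality relation.
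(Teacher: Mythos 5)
Your proof is correct and takes essentially the same route as the paper's: the well-definedness and $G$-invariance arguments are identical, and your termwise verifications (orthogonality of the $P_{g\cdot x}$ collapsing the double sum for multiplicativity, compression onto $\id \otimes \ket{g\cdot x}$ for injectivity, the PVM summing to $\id$ for unitality) are exactly the unpacked form of the paper's identification of $\yen_x(A)$ with the block-diagonal matrix in $\mathrm{M}_{n}(\BHS)$ whose entries are $U_{\mathcal{S}}(g_i) A U_{\mathcal{S}}(g_i)^*$. Nothing is missing; your added remark that injectivity cannot be delegated to simplicity of $\BHS^{G_x}$ is a sound observation the paper leaves implicit.
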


\begin{proof}
First note that the right hand side of \eqref{eqn: yen} well defined because $A$ is $G_{x}$-invariant: if $g,g'$ are both representatives of a coset $gG_x = g'G_x$, then $g = g'h$ for some $h \in G_x$, and
\begin{align*}
U_{\mathcal{S}}(g'h) A U_{\mathcal{S}}(g'h) \circ P_{g'h \cdot x} & = U_{\mathcal{S}}(g')U_{\mathcal{S}}(h) A U_{\mathcal{S}}(h)^* U_{\mathcal{S}}(g)^*  \otimes P_{g'h \cdot x} \\
& = U_{\mathcal{S}}(g') A U_{\mathcal{S}}(g')^* \otimes P_{g' \cdot x}
\end{align*}
where $g'h \cdot x = g' \cdot x$ because $h \in G_{x}$.
If $k \in G$ then 
\begin{align*}
k \cdot \yen_x (A) & = \sum_{gG_{x} \in G/G_{x}} U_{\mathcal{S}}(k) U_{\mathcal{S}}(g) A U_{\mathcal{S}}(g)^* U_{\mathcal{S}}(k)^* \otimes U_{R}(k) P_{g \cdot x} U_{R}(k)^* \\
& = \sum_{gG_{x} \in G/G_{x}} U_{\mathcal{S}}(kg) A U_{\mathcal{S}}(kg)^* \otimes P_{kg \cdot x} \\
& = \sum_{gG_{x} \in G/G_{x}} U_{\mathcal{S}}(g) A U_{\mathcal{S}}(g)^* \otimes P_{g \cdot x}
\end{align*}
which shows that $\yen_x(A)$ is $G$-invariant.
Fix representatives $g_1, \dots, g_n$ of the right cosets of $G_{x}$.
Then the ordered basis $\delta_{g_i \cdot x}$, $1 \le i \le n$ of $L^{2}(X)$ identifies $\BLX$ with $\mathrm{M}_{n}(\mathbb{C})$ and $\BHS \otimes \mathrm{M}_{n}(\mathbb{C})$ with $\mathrm{M}_{n}(\BHS)$.
Under this identification $\yen_x(A)$ is equal to the diagonal matrix
\[
\yen_x (A) = \begin{pmatrix}
A &  &  &  \\
 & U_{\mathcal{S}}(g_1) A U_{\mathcal{S}}(g_1)^* &  \\
 &  & U_{\mathcal{S}}(g_2) A U_{\mathcal{S}}(g_2)^* &  \\
 &  &  & \ddots \\
 &  &  &  & U_{\mathcal{S}}(g_n) A U_{\mathcal{S}}(g_n)^* \\
\end{pmatrix}
\]
from which it follows that the map $\yen_x$ is a unit preserving, injective C*-algebra homomorphism.
\end{proof}
Observables in the image of a $\Y_x$ map are called \emph{relative observables}: they relate system and frame. These correspond in our setting to the relational Dirac observables described in e.g. \cite{de2021perspective}.

\begin{remi}
Replacing $\BLX$ by any $B(\hir)$ for which a covariant POVM $E$ on $X$ may be constructed yields a well defined linear map 
\begin{align*}
\yen^E_x : \BHS^{G_{x}} & \to \left(B(\his) \otimes B(\hir)\right)^G \\
A & \mapsto \sum_{gG_{x} \in G/G_{x}} U_{\mathcal{S}}(g) A U_{\mathcal{S}}(g)^* \otimes E_{g \cdot x};
\end{align*}
however, the injectivity and homomorphism properties are lost, and $E$ may not have rank-$1$ effects. The map is however completely positive.
\end{remi}

\begin{remi}
There are isomorphisms giving $B(\his) \rtimes G \cong (B(\his) \otimes B(L^2(G)))^G \cong B(\his) \otimes B(L^2(G))^G$; under the first isomorphism (given in e.g. \cite{rieffel1980actions}) $\Y_e$ corresponds exactly to the natural inclusion of $B(\his)$ into $B(\his) \rtimes G$. We remark that here $B(L^2(G))^G \cong C^*(G)$ and the isomorphism $B(\his) \rtimes G \cong B(\his) \otimes C^*(G)$ is given in e.g. \cite{Connes} Ch. 2, App. C. The second isomorphism shows that the full algebra of invariants is isomorphic to the original algebra tensored with the invariant algebra of $L^2(G)$, giving a novel  description of the invariant algebra, suggesting that the ``relational information" can be completely factored into one component of the tensor product. This is yet to be fully understood in this context. 

We finally comment that we can replace $G$ with $G/H$ to find $\mathcal{H}(G,H,\mathcal{A},\alpha) \cong (\mathcal{A} \otimes B(L^2(G/H)))^G$; here $\mathcal{H}$ is the ``Hecke crossed product" defined by the data in the parentheses, and $\alpha$ is any $*$-automorphism. $\Y$ factors through the above isomorphism and the injective inclusion of $\mathcal{A}^H$ into $\mathcal{H}(G,H,\mathcal{A},\alpha)$; from here, the properties stated in \ref{prop:props} easily follow. See \cite{Heck!} for more details.
\end{remi}

 \subsection{Restriction map}

 The completely positive conditional expectation $\Gamma_{\omega}$ defined below provides a frame-conditioned observable in $B(\his)$, understood as the description of the system given that the frame has been prepared in the specified state.
 
 For a state $\omega$  on $\BLX$ (we do not make a distinction between the 
positive norm-$1$ linear functional on the algebra and the uniquely associated density operator) we denote by $\Gamma_{\omega}$ the map
\begin{align}
\nonumber \Gamma_{\omega} : \BHS \otimes \BLX & \to \BHS \\
A \otimes B & \mapsto \mathrm{tr}(\omega B) A, \label{eq:res1}
\end{align}
extended linearly to the whole space. This map may be equivalently defined through $\tr{\rho\Gamma_{\omega}(A)}=\tr{\rho \otimes \omega A}$ for all $A \in \BHS \otimes \BLX$ and all states $\rho$ on $\BLX$. Note that $\Gamma_{\omega}$ can be defined on POVMs by composition, and does not preserve PVMs, i.e., for a PVM $Q$, $\Gamma_{\omega}\circ Q$ may not be sharp, as can easily be observed from \eqref{eq:res1}.

\begin{prop}\label{prop:2.2}
The following statements hold.
\begin{enumerate}
\item $( \Gamma_{P_{x}} \circ \yen_{x} ) = \mathrm{id}_{\BHS ^{G_x}}$

\item $( \Gamma_{P_{g\cdot x}} \circ \yen_{x} )$ is equal to the isomorphism
\[
\BHS^{G_{x}} \to \BHS^{G_{g \cdot x}} \; , \; A \mapsto U_{\mathcal{S}}(g) A U_{\mathcal{S}}(g)^* .
\]

\item The following diagram commutes:
\[
\xymatrix{
\BHS^{G_{x}} \ar[rrr]^{\Gamma_{P_{g \cdot x}} \circ \yen_x} \ar[d]_{ U_{\mathcal{S}}(g) (-) U_{\mathcal{S}}(g)^*} &&& \BHS^{G_{g \cdot x}} \ar[d]^{\mathrm{id}} \\
\BHS^{G_{g \cdot x}} \ar[rrr]_{\Gamma_{P_{g \cdot x}} \circ \yen_{g \cdot x} = \mathrm{id}_{\BHS^{G_{g \cdot x}}} } &&& \BHS^{G_{g \cdot x}}
}
\]

\item $\Gamma_{\frac{1}{d} \mathbb{I}} \circ \yen_x$ and $\Gamma_{P_{1}} \circ \yen_x$ are both equal to the surjective map
\begin{equation}
\BHS^{G_x} \to \BHS^{G} \; , \; A \mapsto \frac{1}{|G|} \sum_{g \in G} U_{\mathcal{S}}(g) A U_{\mathcal{S}}(g)^* .
\end{equation}
\end{enumerate}
\end{prop}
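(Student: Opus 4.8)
The plan is to reduce all four statements to a single computation. Since $\Gamma_{\omega}$ is linear and $\Gamma_{\omega}(A \otimes B) = \mathrm{tr}(\omega B) A$, applying it termwise to the sum defining $\yen_x$ gives the master formula
\begin{equation*}
(\Gamma_{\omega} \circ \yen_x)(A) = \sum_{gG_{x} \in G/G_{x}} \mathrm{tr}(\omega\, P_{g \cdot x})\, U_{\mathcal{S}}(g) A U_{\mathcal{S}}(g)^{*},
\end{equation*}
so each statement is determined entirely by the scalar coefficients $\mathrm{tr}(\omega\, P_{g \cdot x}) = p^{P}_{\omega}(g \cdot x)$, the probability that the frame state $\omega$ localises at $g \cdot x$. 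First I would record that these coefficients depend only on the coset $gG_{x}$ (because $g \cdot x$ does), so the sum is well defined, consistently with Proposition~\ref{prop:props}.

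For statements 1 and 2 I would take $\omega$ to be the localised pure state $P_{g_{0} \cdot x}$ and use $\mathrm{tr}(P_{g_{0} \cdot x} P_{g \cdot x}) = |\langle \delta_{g_{0} \cdot x}, \delta_{g \cdot x} \rangle|^{2} = \delta_{g_{0} \cdot x,\, g \cdot x}$, which equals $1$ exactly when $g_{0}^{-1} g \in G_{x}$, i.e. when $gG_{x} = g_{0} G_{x}$. Thus precisely one term of the master formula survives, giving $(\Gamma_{P_{g_{0} \cdot x}} \circ \yen_x)(A) = U_{\mathcal{S}}(g_{0}) A U_{\mathcal{S}}(g_{0})^{*}$; taking $g_{0} = e$ (so $P_{g_{0} \cdot x} = P_{x}$) yields statement 1. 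To finish statement 2 I would check that the image lands in $\BHS^{G_{g_{0} \cdot x}}$: using $G_{g_{0} \cdot x} = g_{0} G_{x} g_{0}^{-1}$ together with the $G_{x}$-invariance of $A$ shows that conjugation by any $U_{\mathcal{S}}(g_{0} h g_{0}^{-1})$, $h \in G_{x}$, fixes $U_{\mathcal{S}}(g_{0}) A U_{\mathcal{S}}(g_{0})^{*}$, while $A \mapsto U_{\mathcal{S}}(g_{0})(-)U_{\mathcal{S}}(g_{0})^{*}$ has inverse $U_{\mathcal{S}}(g_{0})^{*}(-)U_{\mathcal{S}}(g_{0})$, so it is an isomorphism. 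Statement 3 is then immediate: statement 2 identifies the top arrow with conjugation by $U_{\mathcal{S}}(g)$, statement 1 applied at the point $g \cdot x$ identifies the bottom arrow with $\mathrm{id}_{\BHS^{G_{g \cdot x}}}$, and both paths around the square equal $A \mapsto U_{\mathcal{S}}(g) A U_{\mathcal{S}}(g)^{*}$.

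For statement 4 I would compute the coefficients for the two maximally spread states. For $\omega = \tfrac{1}{n}\mathbb{I}$ one has $\mathrm{tr}(\tfrac{1}{n}\mathbb{I}\, P_{g \cdot x}) = \tfrac{1}{n}$, and for $\omega = P_{\const}$ (the normalised constant function, written $P_{1}$ above) one has $\mathrm{tr}(P_{\const} P_{g \cdot x}) = |\langle \tfrac{1}{\sqrt{n}}\const, \delta_{g \cdot x} \rangle|^{2} = \tfrac{1}{n}$; in both cases every coset contributes with equal weight $\tfrac{1}{n}$. The key remaining step, which I regard as the only genuine bookkeeping obstacle, is to convert the coset sum into the full group average: summing over $h \in G_{x}$ within a fixed coset and using $G_{x}$-invariance of $A$ gives $\sum_{g \in G} U_{\mathcal{S}}(g) A U_{\mathcal{S}}(g)^{*} = |G_{x}| \sum_{gG_{x}} U_{\mathcal{S}}(g) A U_{\mathcal{S}}(g)^{*}$, and since $n = |G|/|G_{x}|$ this yields $\tfrac{1}{n}\sum_{gG_{x}}(\cdots) = \tfrac{1}{|G|}\sum_{g \in G}(\cdots)$, the stated map. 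Finally I would note that this averaging map lands in $\BHS^{G}$ (reindexing $g \mapsto kg$ shows invariance) and is surjective because it restricts to the identity on $\BHS^{G} \subseteq \BHS^{G_{x}}$, completing statement 4.
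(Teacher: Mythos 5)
Your proposal is correct and follows essentially the same route as the paper: the reduction to the scalar coefficients $\mathrm{tr}(\omega\, P_{g \cdot x})$, orthogonality of the projections $P_{g\cdot x}$ to kill all but one term in items 1--3, and the coset-to-group-sum conversion via $G_x$-invariance of $A$ together with $|G| = n\,|G_x|$ in item 4. The only cosmetic difference is that the paper handles item 4 in one stroke for an arbitrary $G$-invariant state (deducing the uniform coefficient $1/n$ from equivariance of $P$) where you compute the two coefficients separately, and you spell out details the paper leaves implicit (the commuting diagram in item 3, and surjectivity in item 4).
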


\begin{proof}

\begin{enumerate}
\item This is a special case of 2.
\item  Using the fact that the projections $P_{x}, P_{y}$ are orthogonal for $x \ne y$, we have that
\begin{align*}
(\Gamma_{P_{g\cdot x}} \circ \yen_{x}) (A) & = \sum_{kG_{x} \in G/G_{x}} \mathrm{tr}( P_{g\cdot x} P_{k \cdot x} ) U_{\mathcal{S}}(k) A U_{\mathcal{S}}(k)^* \\
& = \mathrm{tr}( P_{g\cdot x}^{2} ) U_{\mathcal{S}}(g) A U_{\mathcal{S}}(g)^* \\
& = U_{\mathcal{S}}(g) A U_{\mathcal{S}}(g)^* .
\end{align*}
The map 
\[
\BHS^{G_{x}} \to \BHS^{g G_{x} g^{-1}} \; ,\; A \mapsto U_{\mathcal{S}}(g) A U_{\mathcal{S}}(g)^*
\]
is a C*-algebra isomorphism, and $gG_{x}g^{-1} = G_{g \cdot x}$.
\item Direct computation.
\item For any invariant state $\rho$, the equivariance of $P$ means, $\tr{P_x \rho}=\tr{P_{g\cdot x}\rho}$ for all $g$ and $x$, and hence the distribution
$y \mapsto p^P_{x}(y)$ is uniform and equal to $\frac{1}{n}$ (recall that $n = {\rm dim}~L^2(X) = |G/G_x|$). Therefore for any invariant $\rho$, 
\begin{align*}
    (\Gamma_{\rho} \circ \yen_{x}) (A) &=\sum_{gG_{x} \in G/G_{x}} U_{\mathcal{S}}(g) A U_{\mathcal{S}}(g)^*  \tr{\rho P_{g.x}}\\
    &= \frac{1}{n}\sum_{gG_{x} \in G/G_{x}} U_{\mathcal{S}}(g) A U_{\mathcal{S}}(g)^*\\
    &= \frac{1}{|G/H|}\frac{1}{|H|}\sum_{gG_{x} \in G/G_{x}} \sum_{h \in G_x} U_{\mathcal{S}}(gh) A U_{\mathcal{S}}(gh)^*,\\
    &= \frac{1}{|G|}\sum_{g \in G} U_{\mathcal{S}}(g) A U_{\mathcal{S}}(g)^*,
\end{align*}
where in the final line we used that $A$ is invariant under $G_x$ and that
$|G|=n |G_x|$.
\end{enumerate}
\end{proof}

We interpret these findings as follows.
Previous work (e.g. \cite{lov1}) considered the setting $X=G$ as a group,  picking out the privileged point $e \in G$, with the analogue of item 1. of \ref{prop:2.2}
reading $(\Gamma_{P_e}\circ \Y) = \mathrm{id}_{B(\his)}$ (where $\Y$ is defined as in \eqref{eq:yen1} for finite $G$). This came with the interpretation that if the frame is localised at the identity (i.e., prepared in the vector state $\delta_e$), there is an exact agreement between 
the description of the system $\Sy$ on its own, and with the inclusion of the appropriately prepared frame, thereby pointing to a condition on the frame that must be satisfied for the usual frameless/external description to apply. The deficiency of that work in contrast to the present formulation is twofold: first, it pertained only to the setting in which the underlying $G$ space is (at least as a set) essentially equal to $G$, and second, it was contingent on an arbitrary choice of base point. 

The present work remedies these issues: the $\Y$ construction is now given as an $X$-parametrised family of maps, allowing for more general scenarios than $X=G$ and restoring the egalitarian understanding of $X$ as a collection of featureless points. This is reflected in item 1. of \ref{prop:2.2}, in which the internal and external descriptions apply to each $\Y_x(A)$, given a state localised at $x \in X$. Moreover, by item 2., for any given $x \in X$, the algebras defined through localisation at $g.x$ are manifestly isomorphic and contain the same physical information---as can be seen explicitly from item 3. We also observe that localisation at $x$ still yields a $G_x$-invariant description, owing to the fact that $\Y_x$ can only make sense on $G_x$-invariant operators. This is a direct analogue of a result in \cite{de2021perspective}, where it is noted that in the presence of a non-trivial stabiliser for the frame, ``the quantum
frame can only resolve those properties of the remaining subsystems that are also invariant under its
isotropy [stabiliser] group". In other words, even if $\Sy$ has ostensibly 
$G_x$ non-invariant properties (more properly understood as manifested with respect to a complete frame), these can never be observed by any frame with
a $G_x$ stabiliser. Item 4. shows that invariant reference states (noting that there always exists at least one pure and one mixed invariant state) gives rise to a finite version of the $G$-twirl, prevalent in the quantum information approach to QRFs \cite{brs}. This is used there to `remove' any external frame dependence, where the frame is understood as localised but the actual value is not known, and the averaging captures the lack of knowledge of the exact orientation. Here, the interpretation is that the quantum reference state is completely indeterminate (with respect to $G$), and thus it is not surprising that the description of $\Sy$ relative to an indeterminate state carries no dependence on $G$. In other words, only invariant states/observables can be used to describe $\Sy$ and the $G$-sensitivity is completely washed out by the $G$-twirl operation.

\subsection{Further analysis}\label{sec: povm}
The canonical PVM $P$ on $X$ defines a rank-$1$ POVM on $G$ by pulling back along the quotient map. This allows us to connect the perspective drawn up in this paper with that based on coherent states given in \cite{de2021perspective}. 

We begin by observing that fixing a base point $x \in X$ determines a map
\begin{equation*}
\label{eqn: orbit}
\pi : G \to X \; , \; g \mapsto g \cdot x .
\end{equation*}
If $H := G_{x}$, then under the isomorphism 
\[
G/H \to X \; , \; gH \mapsto g \cdot x
\]
the map $\pi$ corresponds to the quotient map 
\[
G \to G/H , \; \, g \mapsto gH .
\]
The PVM $P : X \to \BLX$ can be pulled back along $\pi$ to define a map $P \circ \pi : G \to \BLX$. Normalising this map then defines a POVM on $G$:
\[
E: G \to B(L^2(X)) \; , \; g \mapsto \frac{1}{|H|} P_{gH} =: E_{g}.
 \]
The positivity of each $E_g$ is guaranteed by the positivity of  $\frac{1}{|H|}$; this prefactor is introduced to make sure $E(X)= \mathbb{I}$:
\[
\sum_{g \in G} P_{gH} = |H| \sum_{gH \in G/H} P_{gH} .
\]
Again, given a state $\omega$ we call the probability distribution $p^E_{\omega} = \tr{E_{\cdot}\omega}$ localised at $g \in G$ if $p^E_{\omega} = \delta_g$.

\begin{prop}
The following are equivalent:
\begin{enumerate}
    \item $X$ is a principal homogeneous space, i.e., $G/H = G$ \label{a}
    \item $H$ is trivial. \label{b}
    \item $E$ is sharp (projection-valued). \label{c}
    \item $E$ satisfies the norm-$1$ property. \label{d}
    \item For each $g \in G$, there exists a state $\omega$ on $\BLX$ such that $p^E_{\omega}$ is localised at $g$. \label{e}
    \item  For each $g \in G$, there exists a pure state $\omega$ on $\BLX$ such that $p^E_{\omega}$ is localised at $g$. \label{f}
\end{enumerate}
\end{prop}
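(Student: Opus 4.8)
The plan is to reduce every one of the six conditions to the single numerical statement $|H| = 1$, using the explicit form of the effects. The key preliminary observation, which I would record first, is that each $P_{g \cdot x}$ is a rank-one orthogonal projection, hence nonzero with $\|P_{g \cdot x}\| = 1$. Consequently every effect $E_g = \frac{1}{|H|} P_{g \cdot x}$ is nonzero, satisfies $\|E_g\| = \frac{1}{|H|}$, and has $E_g^2 = \frac{1}{|H|^2} P_{g \cdot x}$. Almost every implication follows mechanically from these three facts, so this computation is the heart of the argument.

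The equivalence $\text{(1)} \iff \text{(2)}$ is immediate: under the bijection $G/H \cong X$ we have $|X| = |G|/|H|$, so $X$ is principal (that is, $|X| = |G|$) exactly when $|H| = 1$. For $\text{(2)} \iff \text{(3)}$, note that $E_g$ is a projection iff $E_g^2 = E_g$, i.e. iff $\frac{1}{|H|^2} P_{g \cdot x} = \frac{1}{|H|} P_{g \cdot x}$; since $P_{g \cdot x} \neq 0$ this holds iff $\frac{1}{|H|} = 1$. For $\text{(2)} \iff \text{(4)}$, the norm-$1$ property demands $\|E_g\| = 1$ for every (nonzero) effect, and as $\|E_g\| = \frac{1}{|H|}$ this again forces $|H| = 1$; conversely, if $|H| = 1$ then $E = P \circ \pi$ is already projection-valued. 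Thus statements $\text{(1)}$--$\text{(4)}$ all collapse to $|H| = 1$.

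Finally, I would obtain $\text{(4)} \iff \text{(5)} \iff \text{(6)}$ by invoking Proposition~\ref{prop:no1} applied to the POVM $E$ on $G$, first observing that all effects $E_g$ are nonzero so that the hypothesis of that proposition is met. The one point requiring care is the translation between the localisation condition $p^E_\omega = \delta_g$ appearing in $\text{(5)}$, $\text{(6)}$ and the pointwise condition $\tr{E_g \omega} = 1$ used in Proposition~\ref{prop:no1}: because $E$ is normalised, $\sum_{g'} p^E_\omega(g') = \tr{\mathbb{I}\,\omega} = 1$, so a single probability equal to $1$ forces all others to vanish, whence $\tr{E_g \omega} = 1$ is equivalent to $p^E_\omega = \delta_g$. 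With this remark the chain closes. I do not expect a genuine obstacle here; the only real content is the norm identity $\|E_g\| = 1/|H|$, and the modest bookkeeping needed to match statements $\text{(5)}$--$\text{(6)}$ to Proposition~\ref{prop:no1}.
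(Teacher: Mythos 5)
Your proof is correct and follows essentially the same route as the paper's: conditions (1)--(4) are collapsed to $|H|=1$ using the explicit form $E_g = \frac{1}{|H|}P_{gH}$ of the effects, and (5)--(6) are obtained by invoking Proposition~\ref{prop:no1}. You actually supply slightly more detail than the paper does (the norm computation $\|E_g\| = 1/|H|$ and the observation that normalisation of $E$ makes $\tr{E_g\omega}=1$ equivalent to $p^E_\omega = \delta_g$), but these are fillings-in of the same argument, not a different one.
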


\begin{proof}
 $1.$ $\Leftrightarrow$ $2.$
Standard.

$2.$ $\Leftrightarrow$ $3.$
$H$ is trivial iff $|H| = 1$ iff $E_{g} = 1/|H| P_{gH}$ is a projection.

$3.$ $\Leftrightarrow$ $4.$ Any sharp observable satisfies the norm-$1$ property. Since $E(g) = 1/|H|P_{gH}$, $E(g)$ can satisfy the norm-$1$ property only if $|H|=1$ in which case $E$ is sharp. The rest follows from Prop. \ref{prop:no1}.
\end{proof}

\begin{remi}
Some of the statements also follow from the observation that if $H$ is non-trivial then $|G| > \mathrm{dim} \; \LX$ and so there cannot exist a PVM from $G$ to $\BLX$.
\end{remi}
\begin{remi}
Since each $E_g$ is rank-1 and $E$ is covariant, we can write $E_g=(1/|H|)\ket{g}\bra{g}_x$, the collection of which therefore make up a coherent state system, making contact with \cite{de2021perspective}. This will be discussed further in subsection \ref{sec: coherent}.
\end{remi}

We can now write the $\Y$ map in terms of $E$; again 
we fix a base point $x \in X$, set $H:= G_{x}$, and identify $X$ with $G/H$.
Using the POVM $E$ one can define a linear map
\begin{align*}
\yen^{E} : \BHS & \to \left( \BHS \otimes \BLX \right)^{G} \\
A & \mapsto \sum_{g \in G} U_{\mathcal{S}}(g) A U_{\mathcal{S}}(g)^{*} \otimes E_{g}
\end{align*}

\begin{prop}
\label{prop: yen_E_factor}
For $A \in \BHS$ it holds that
\begin{align*}
\yen^{E}(A) & = \sum_{gH \in G/H} U_{\mathcal{S}}(g)  \left( \frac{1}{|H|} \sum_{h \in H} U_{\mathcal{S}}(h) A U_{\mathcal{S}}(h)^{*} \right)  U_{\mathcal{S}}(g)^{*} \otimes P_{gh}
\end{align*}
In particular, the map $\yen^{E}$ factors as
\[
\BHS \xrightarrow{\mathrm{av}_{H}} \BHS^{H} \xrightarrow{\yen_x} \left( \BHS \otimes \BLX \right)^{G} 
\]
where $\mathrm{av}_{H}$ is the average over $H$ or $H$-twirl:
\[
\mathrm{av}_{H} : \BHS \to \BHS^{H} \; , \; A \mapsto \frac{1}{|H|} \sum_{h \in H} U_{\mathcal{S}}(h) A U_{\mathcal{S}}(h)^{*} .
\]
\end{prop}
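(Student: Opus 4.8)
The plan is to reduce everything to the definition $E_g = \tfrac{1}{|H|} P_{gH}$ and then reorganise the sum over $G$ according to the coset decomposition $G = \bigsqcup_{gH \in G/H} gH$. First I would substitute $E_g = \tfrac{1}{|H|} P_{gH}$ into the definition of $\yen^{E}(A)$, obtaining $\tfrac{1}{|H|}\sum_{g \in G} U_{\mathcal{S}}(g) A U_{\mathcal{S}}(g)^{*} \otimes P_{gH}$. Fixing a set of coset representatives, I would write each element of $G$ uniquely as $gh$ with $g$ a representative and $h \in H$, so that $\sum_{g \in G} = \sum_{gH \in G/H}\sum_{h \in H}$. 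The key observation is that the projection factor is unchanged under $g \mapsto gh$: since $h \in H = G_x$ fixes $x$, we have $gh \cdot x = g \cdot x$ and hence $P_{gh \cdot x} = P_{g \cdot x} = P_{gH}$, which is exactly what allows the projection to be pulled out of the inner sum over $h$.

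Next I would use the representation property $U_{\mathcal{S}}(gh) = U_{\mathcal{S}}(g) U_{\mathcal{S}}(h)$ to factor $U_{\mathcal{S}}(g)(\,\cdots\,)U_{\mathcal{S}}(g)^{*}$ out of the inner sum, leaving precisely $\tfrac{1}{|H|}\sum_{h \in H} U_{\mathcal{S}}(h) A U_{\mathcal{S}}(h)^{*}$ in the middle slot. This reproduces the displayed formula (with $P_{gh}$ read as $P_{gh \cdot x} = P_{g \cdot x}$). To obtain the factorisation statement, I would identify this inner sum as $\mathrm{av}_{H}(A)$ and verify it lies in $\BHS^{H}$ — a one-line check, since left multiplication by any $h' \in H$ merely permutes the summands — so that $\yen_x$, whose domain is exactly $\BHS^{G_x} = \BHS^{H}$, may legitimately be applied to it. Comparing the resulting expression $\yen_x(\mathrm{av}_H(A)) = \sum_{gH \in G/H} U_{\mathcal{S}}(g)\, \mathrm{av}_H(A)\, U_{\mathcal{S}}(g)^{*} \otimes P_{g \cdot x}$, read off directly from the definition \eqref{eqn: yen} of $\yen_x$, with the formula just derived then closes the argument, giving $\yen^{E} = \yen_x \circ \mathrm{av}_{H}$.

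I expect no deep obstacle: the content is essentially bookkeeping with the coset structure of $G/H$, combined with the elementary fact that $H$ acts trivially on the base point $x$. The only point that genuinely needs care is well-definedness — that neither the reorganised double sum nor the application of $\yen_x$ depends on the choice of coset representatives. However, this independence is already established by the $G_x$-invariance computation at the start of the proof of Proposition \ref{prop:props}, so rather than redo it I would simply invoke that result, noting that $\mathrm{av}_H(A) \in \BHS^{H}$ is precisely the hypothesis under which $\yen_x$ is well defined.
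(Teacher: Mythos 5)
Your proposal is correct and follows essentially the same route as the paper's own proof: decompose $\sum_{g \in G}$ as $\sum_{gH \in G/H}\sum_{h \in H}$, use $E_{gh} = \tfrac{1}{|H|}P_{ghH} = \tfrac{1}{|H|}P_{gH}$ to pull the projection out of the inner sum, factor $U_{\mathcal{S}}(g)(\,\cdot\,)U_{\mathcal{S}}(g)^{*}$ outside, and recognise the inner sum as $\mathrm{av}_{H}(A)$. Your extra care about well-definedness, the check that $\mathrm{av}_{H}(A) \in \BHS^{H}$, and the correct reading of the statement's typo $P_{gh}$ as $P_{gH}$ only make explicit what the paper leaves implicit.
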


\begin{proof}
We have
\begin{align*}
\sum_{g \in G} U_{\mathcal{S}}(g) A U_{\mathcal{S}}(g)^{*} \otimes E_{g} & = \sum_{gH \in G/H} \sum_{h \in H} U_{\mathcal{S}}(gh) A U_{\mathcal{S}}(gh)^{*} \otimes E_{gh} \\
& = \sum_{gH \in G/H} \sum_{h \in H} U_{\mathcal{S}}(g) U_{\mathcal{S}}(h) A U_{\mathcal{S}}(h)^{*} U_{\mathcal{S}}(g)^{*} \otimes \frac{1}{|H|} P_{ghH} \\
& = \sum_{gH \in G/H} U_{\mathcal{S}}(g) \left( \frac{1}{|H|} \sum_{h \in H} U_{\mathcal{S}}(h) A U_{\mathcal{S}}(h)^{*} \right) U_{\mathcal{S}}(g)^{*} \otimes P_{gH} \\
\end{align*}
as required.
\end{proof}

\begin{remi}
\label{remi: povm_yen}
This shows that one is forced to average observables over $H$.
\end{remi}

Under restriction $\Gamma_{\omega}$, the above result shows clearly how statistics of non-$H$-invariant states/observables for $\Sy$ can never arise; even if $\omega$ is localised on the coset $eH$, the resulting observable is twirled over $H$. 

\subsection{Coherent states}
\label{sec: coherent}
We would like to compare our setup with that of \cite{de2021perspective}, in particular with the notions and results contained in pp10-12.
The POVM $E$ can be understood as corresponding to a system of coherent states. We use Dirac notation to make the comparison as clear as possible.

At the level of \emph{vectors}, we have a map
\[
\tilde{E} : G \to \LX \; , \; g \mapsto \delta_{gH} =: |gH \rangle ,
\]
where the notation $\tilde{E}$ is chosen to agree with \cite{de2021perspective}.
The corresponding map to projections is
\[
G \to \BLX \; , \; g \mapsto |gH \rangle \langle gH| = |H| E_{g} ,
\]
where $E_g=1/|H|P_{gH}$ as defined in Subsec. \ref{sec: povm}. 
The set $\{ |gH\rangle \; | \; g \in G \}$ is a homogeneous space with action
\[
k \cdot |gH \rangle = U_{R}(k) |gH\rangle = |kgH\rangle
\]
In particular, 
\[
|gH\rangle = U_{R}(g) |eH\rangle
\]
and the stabiliser of the base point $|eH\rangle$ is $H$.

\begin{prop}
\label{prop: povm_coherent}
The following are equivalent:
\begin{enumerate}
\item The reference frame $X$ is principal.
\item $H$ is trivial.
\item The POVM $E$ is a PVM.
\item The system of coherent states $\tilde{E}$ is \emph{complete} in the sense of \cite{de2021perspective} p.11, which by definition means that the stabiliser of each $|gH\rangle$ is trivial.
\item The system of coherent states $\tilde{E}$ is \emph{ideal} in the sense of \cite{de2021perspective}, p.11, which by definition means that the vectors $|gH\rangle$ are orthogonal to each other.
\end{enumerate}
If any of the above conditions do not hold, then the system of coherent states $\tilde{E}$ is \emph{incomplete} in the sense of \cite{de2021perspective}, p.11.
\end{prop}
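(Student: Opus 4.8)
The plan is to reduce every condition to the single hub statement that $H$ is trivial (item~2), exploiting that the equivalences $1 \Leftrightarrow 2 \Leftrightarrow 3$ were already established in the preceding proposition. It then suffices to prove $2 \Leftrightarrow 4$ and $2 \Leftrightarrow 5$ and to read off the concluding incompleteness claim.

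For $2 \Leftrightarrow 4$ I would compute the stabiliser of an arbitrary coherent state directly from the action $k \cdot |gH\rangle = |kgH\rangle$: an element $k \in G$ fixes $|gH\rangle$ precisely when $kgH = gH$, i.e.\ when $k \in gHg^{-1}$. Hence the stabiliser of $|gH\rangle$ is the conjugate $gHg^{-1}$, recovering in particular that the stabiliser of the base point $|eH\rangle$ is $H$, as already observed above. Since conjugation is an automorphism of $G$, all these stabilisers are trivial if and only if $H$ itself is trivial, which is exactly the equivalence of items~2 and~4.

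For $2 \Leftrightarrow 5$ I would compute the pairwise inner products of the coherent vectors. As $|gH\rangle = \delta_{gH}$ is the indicator function of the point $gH \in X \cong G/H$, and indicator functions of distinct points of $X$ are orthogonal unit vectors of $\LX$, one has $\langle gH | g'H \rangle = 1$ when $gH = g'H$ and $0$ otherwise. The step I expect to need the most care is the bookkeeping of the indexing: the assignment $g \mapsto |gH\rangle$ is $|H|$-to-one, so when $H$ is non-trivial there are distinct elements $g \neq g' = gh$ with $h \in H \setminus \{e\}$ labelling the \emph{same} vector, and coincident non-zero vectors cannot be orthogonal; thus the system fails to be ideal. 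Conversely, when $H$ is trivial the map $g \mapsto |gH\rangle$ is a bijection onto a genuinely orthonormal family. This yields $5 \Leftrightarrow 2$.

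Finally, the incompleteness assertion is immediate: by the definition imported from~\cite{de2021perspective}, ``incomplete'' is simply the negation of ``complete'' (item~4), so if any one of the listed conditions fails then, by the equivalences just proved, item~4 fails and $\tilde{E}$ is incomplete. I would close by noting that this is also visible directly from $2 \Leftrightarrow 4$, since a single coherent state with non-trivial stabiliser already witnesses incompleteness.
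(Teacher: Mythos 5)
Your proposal is correct, and it follows exactly the route the paper intends: the paper states Proposition \ref{prop: povm_coherent} without an explicit proof, leaving it to follow from the preceding proposition (which gives $1 \Leftrightarrow 2 \Leftrightarrow 3$) together with the observations made just above it --- that $k \cdot |gH\rangle = |kgH\rangle$, so the stabiliser of $|gH\rangle$ is the conjugate $gHg^{-1}$, and that the $|gH\rangle$ are indicator functions of points of $G/H$. Your write-up simply makes this implicit argument explicit, including the one genuinely delicate point (the $|H|$-to-one indexing of the coherent vectors by $G$, which is what makes idealness fail for non-trivial $H$), so nothing further is needed.
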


The resolution of the identity property, up to scaling, holds:
\[
\sum_{g \in G} |gH \rangle \langle gH| = |H| \; \mathbb{I} ,
\]
where the factor of $|H|$ arises because $E_{g} = \frac{1}{|H|} |gH \rangle \langle gH|$. This corresponds to \cite{de2021perspective}, p.11 (5). 

The POVM $E$ can be recovered from $\tilde{E}$ as in \cite{de2021perspective}, p. 11 (6):
\[
E_{g} = \frac{1}{|H|} |\tilde{E}_{g}\rangle \langle \tilde{E}_{g}| = |gH\rangle \langle gH| .
\]
The normalisation $|H|$ is equal to $\mathrm{Vol}(H)$ in \cite{de2021perspective}, p.12 (12).
The factorisation of $\yen^{E}$, Proposition \ref{prop: yen_E_factor} above, is related to the factorisation of the integral in \cite{de2021perspective} p.12 (12).

\subsection{Example: Permutations}

 For $n > 2$ the symmetric group $S_{n}$ of permutations of the set 
 $X = \{x_1,\dots,x_n\}$ provides an example of a homogeneous space 
 which is not principal. For the action we write 
\[
\sigma \cdot x_{i} = x_{\sigma(i)}
\]
for $\sigma \in S_n$ and $1 \le i \le n$.
The stabiliser of each element of $X$ is a symmetric group on $n-1$ symbols.
Choosing $x_n$ for clarity, the stabiliser is naturally identified with the symmetric group $S_{n-1} \subset S_{n}$, and there is an isomorphism of $S_{n}$-sets
\[
S_{n} / S_{n-1} \xrightarrow{\cong} X \; , \; \sigma S_{n-1} \mapsto \sigma \cdot x_{n} = x_{\sigma(n)} .
\]
As an $S_n$ representation the Hilbert space $\LX$ decomposes as 
\[
\LX \cong \mathrm{Ind}_{S_{n-1}}^{S_n} \mathbb{C} = \mathbb{C} \oplus \mathrm{Stand}
\]
where $\mathrm{Stand}$ is the standard $n-1$ dimensional irreducible representation of $S_{n}$, which is
\[
\mathrm{Stand} := \left\{ f \in L^{2}(X) \; | \; \sum_{i=1}^{n} f(x_{i}) = 0 \right\} .
\]
In particular, $P_{\const}$ is unique $S_n$-invariant pure state on $\BLX$.

Let $\HH$ be a fixed Hilbert space, $\HS = \HH^{\otimes n}$, and $S_{n}$ act on $\HS$ by permuting tensor factors.
The corresponding action on $B(\HH^{\otimes n}) \cong B(\HH)^{\otimes n}$ is by permuting tensor factors.

The domain of $\yen_{j}$ is the algebra
\[
( B(\HH)^{\otimes n} )^{(jn)S_{n-1}(jn)}
\]
where $(jn) S_{n-1} (jn)$ is the stabiliser of $x_{j}$, and $S_{n-1}$ is the stabiliser of $x_n$.
In particular, the domain of $\yen_{n}$ is the algebra
\[
B(\HH^{\otimes n}) ^{S_{n-1}} \cong \left( B(\HH)^{\otimes n} \right)^{S_{n-1}} = \left( B(\HH)^{\otimes n-1} \right)^{S_{n-1}} \otimes B(\HH) .
\]
If 
\[
A = A_{1} \otimes A_{2} \otimes \cdots \otimes A_{n} \in B(\HH)^{\otimes n} \cong B(\HH^{\otimes n})
\] 
is an element of $\BHS^{S_{n-1}}$ then 
\begin{align*}
\yen_{n} (A) & = \sum_{\sigma \in S_{n-1}} A_{ \sigma^{i}(1) } \otimes \cdots \otimes A_{ \sigma^{i}(n) } \otimes P_{\sigma^{i}(1)} .
\end{align*}
For the maps $\Gamma_{\omega}$ defined by localised states, each of the maps 
\[
\Gamma_{P_{i}} \circ \yen_{j} : ( B(\HH)^{\otimes n} ) ^{(jn)S_{n-1}(jn)} \to ( B(\HH)^{\otimes n} )^{(in)S_{n-1}(in)}
\]
is an isomorphism, where $1 \le i,j \le n$.
In particular, for $j=n$, these maps are 
\[
(\Gamma_{P_{i}} \circ \yen_{n}) (A_1 \otimes \cdots \otimes A_{n}) = A_{\sigma^{i}(1)} \otimes \cdots \otimes A_{\sigma^{i}(n)} .
\]
This exemplifies the behaviour witnessed in the general setting: the restricted relativised observables can be perfectly recovered up to $S_{n-1}$ invariance, but no better.

\section{Concluding Remarks}
We have given a construction of quantum reference frames suited to finite homogeneous spaces and analysed the situation in which the reference observable is given by the canonical covariant PVM, which, as we have seen, reduces to the coherent state set-up given in \cite{de2021perspective}. The case of an unsharp reference observable has not been analysed here, but it is clear that it gives something more general than a system of coherent states on $G$, and in this sense the setting we have provided is more general. It is not clear if other advantages are afforded from the homogeneous space point of view. On the one hand it appears more classical than the prescription of \cite{de2021perspective}, since it appears to be predicated upon the existence of some underlying space $X$. On the other, $X$ can be interpreted as the value space of some observable $E$, and as such is more operational. Future work, involving also the compact and locally compact settings, should shed further light on the relative merits of each approach.  

\section*{Acknowledgements} LL would like to thank the Theoretical Visiting Sciences Programme (TSVP) at the Okinawa Institute of Science and Technology for enabling his visit, and for the generous hospitality and excellent working conditions during his time there, during which part of this work was completed. Thanks are also due to M. H. Mohammady for a careful reading of an earlier draft of this manuscript.

\bibliographystyle{apsrev4-2}
\bibliography{bib}

\end{document}